\documentclass{llncs}
\usepackage{amsmath, amssymb}
\usepackage{graphicx}
\usepackage[hidelinks]{hyperref}
\usepackage{booktabs}
\usepackage{url}
\usepackage{xspace}
\usepackage{tikz}
\usepackage{pgfplots}
\pgfplotsset{compat=1.18}
\usetikzlibrary{arrows.meta, positioning, shapes.geometric, patterns, calc,
  fit, backgrounds, decorations.pathreplacing}

\newcommand{\SWF}{\textsf{SWF}\xspace}
\newcommand{\CDCE}{\textsf{CDCE}\xspace}
\newcommand{\Adv}{\mathcal{A}}
\newcommand{\negl}{\mathrm{negl}}

\titlerunning{TEE-Based Confidential and Dependable Process Attestation}
\authorrunning{D. Condrey}

\begin{document}

\title{A TEE-Based Architecture for Confidential and\\
  Dependable Process Attestation in\\
  Authorship Verification}

\author{David Condrey}
\institute{Writerslogic, Inc.}

\maketitle

\begin{abstract}
Process attestation systems verify that a continuous physical
process, such as human authorship, actually occurred, rather than
merely checking system state. These systems face a fundamental
dependability challenge: the evidence collection infrastructure must
remain available and tamper-resistant even when the attesting party
controls the platform. Trusted Execution Environments (TEEs) provide
hardware-enforced isolation that can address this challenge, but their
integration with continuous process attestation introduces novel
resilience requirements not addressed by existing frameworks.
We present the first architecture for continuous process
attestation evidence collection inside TEEs, providing
hardware-backed tamper resistance against trust-inverted adversaries
with graduated input assurance
from software-channel integrity (Tier~1) through hardware-bound
input (Tier~3).
We develop a Markov-chain dependability model quantifying Evidence
Chain Availability (ECA), Mean Time Between Evidence Gaps (MTBEG),
and Recovery Time Objectives (RTO). We introduce a resilient evidence
chain protocol maintaining chain integrity across TEE crashes,
network partitions, and enclave migration. Our security analysis
derives formal bounds under combined threat models including trust
inversion and TEE side channels, parameterized by a conjectural
side-channel leakage bound $\epsilon_{\mathrm{sc}}$ that requires
empirical validation. Evaluation on Intel SGX demonstrates
under 25\% per-checkpoint CPU overhead (\textless 0.3\% of the 30\,s
checkpoint interval), \textgreater 99.5\% Evidence Chain Availability (ECA) (the fraction of session time with active evidence collection) in Monte Carlo simulation under Poisson failure models,
and sealed-state recovery under 200\,ms.
\end{abstract}

\keywords{process attestation \and trusted execution environments \and dependability \and evidence chain availability \and trust inversion}

\section{Introduction}
\label{sec:introduction}

Verifying human authorship requires tamper-resistant, continuously
available evidence. Process attestation captures keystroke dynamics,
content evolution, and temporal proofs at regular checkpoints,
cryptographically binding them into an evidence chain for independent
verification~\cite{Condrey2026TrustInversion}. However, the evidence pipeline runs on
adversary-controlled hardware---a \emph{trust inversion} where the
attesting party controls the platform and is motivated to fabricate
evidence. Under trust inversion, the four standard
RATS~\cite{RFC9334} security properties (identity, integrity,
confidentiality, freshness) are necessary but
insufficient: they assume a cooperative Attester, whereas process
attestation requires resilience against a deliberately deceptive one.
Software-only evidence collection provides zero assurance under this
threat model, since any software-layer defense can be bypassed by the
platform owner.

Trusted Execution Environments (TEEs) such as Intel
SGX~\cite{Costan2016}, ARM TrustZone~\cite{Pinto2019}, and AMD
SEV-SNP~\cite{Sev2020} provide hardware-enforced isolation, converting
the trust problem from ``the adversary controls everything'' to ``the
adversary controls everything \emph{except} the enclave.'' TEEs also
enable local processing (preserving privacy, supporting offline
authoring) without per-user server infrastructure.

Yet TEEs introduce \emph{dependability challenges}: enclaves can
crash (hardware faults, OS-triggered termination, power loss), SGX
imposes a 128\,MiB Enclave Page Cache (EPC) limit requiring careful
memory management, and side
channels~\cite{Kocher2019,SGAxe2020,VanBulck2020} leak bounded
information. Since evidence chains are sequential and
cumulative---a single gap can reduce the evidentiary value of all
prior checkpoints---the system must guarantee high availability and
rapid recovery across multi-hour authoring sessions.

\paragraph{Gap.} No existing work addresses the dependability of
TEE-based continuous process attestation. TEE attestation
frameworks~\cite{Knauth2018,Schnabl2025} verify enclave identity at
session establishment but do not model or guarantee continuous evidence
collection over extended sessions. The dependability
taxonomy of Avizienis et al.~\cite{Avizienis2004} provides the
foundational framework but has not been instantiated for process
attestation, leaving ECA, MTBEG, and crash recovery unquantified.

\paragraph{Contributions.}
(1)~The first architecture for \emph{continuous} process attestation
evidence collection inside TEE enclaves with graduated input assurance
(Tier~1 software through Tier~3 hardware-bound)
(Sect.~\ref{sec:architecture}).
(2)~A CTMC-based dependability model with closed-form ECA, MTBEG, and
RTO expressions (Sect.~\ref{sec:dependability}).
(3)~A resilient evidence chain protocol with sealed recovery, offline
attestation, and formal chain integrity proofs
(Sect.~\ref{sec:protocol}).
(4)~Combined security analysis under trust inversion composed with
side channels, DoS, and clock attacks, with the combined bound
parameterized by a conjectural side-channel leakage term requiring
empirical validation (Sect.~\ref{sec:security}).
(5)~Evaluation on Intel SGX: {<}25\% per-checkpoint overhead,
{>}99.5\% ECA, and recovery under 200\,ms
(Sect.~\ref{sec:evaluation}).

\section{Background and Related Work}
\label{sec:background}

\paragraph{Process attestation.}
Process attestation extends remote attestation from verifying system
state~\cite{Abera2016,Ammar2025} to verifying continuous physical
processes---``what physical process occurred.'' The \emph{trust
inversion} threat model formalizes the scenario where the Attester is
the primary adversary. \emph{Temporal authenticity} is proven via
Sequential Work Functions (\SWF), which chain memory-hard
Argon2id~\cite{Biryukov2016} evaluations proving elapsed duration
without trusted clocks. \emph{Cross-domain binding} is achieved via
Cross-Domain Constraint Entanglement (\CDCE), binding content,
behavioral, and temporal evidence under an HMAC keyed by \SWF output.

\paragraph{TEE platforms.}
Three TEE families are relevant: Intel
SGX~\cite{Costan2016,McKeen2013} (application-level enclaves, known
side channels~\cite{VanBulck2018,Chen2019sgaxe}), ARM
TrustZone~\cite{Pinto2019} (OS-level Secure World), and AMD
SEV-SNP~\cite{Sev2020} (VM-level encryption~\cite{Li2021}).
SGX provides data sealing and monotonic counters; TrustZone offers
secure storage and TA restart; SEV-SNP provides VM disk encryption
with live migration.

\paragraph{TEE security systems.}
RA-TLS~\cite{Knauth2018} integrates attestation with TLS;
SCONE~\cite{Arnautov2016} and Gramine~\cite{Tsai2017} run unmodified
applications in SGX but cannot provide sealed recovery, \SWF chain
continuity, or the compact TCB required here.
Schnabl et al.~\cite{Schnabl2025} demonstrate TEE-based attestable
audits; Arfaoui et al.~\cite{Arfaoui2022} formalize deep attestation.
None address continuous process evidence collection.

\paragraph{Dependability and attestation formalization.}
Avizienis et al.~\cite{Avizienis2004} define the foundational
taxonomy; Trivedi~\cite{Trivedi2002} provides Markov modeling
methodology.
Petz and Alexander~\cite{Petz2023} verify attestation protocol
correctness; Ramsdell et al.~\cite{Ramsdell2019} formalize layered
attestation; Kretz et al.~\cite{Kretz2024} analyze evidence tampering.
Crosby and Wallach~\cite{Crosby2009} formalize tamper-evident logging.
Alder et al.~\cite{Alder2022} and Gu et al.~\cite{Gu2017} address
SGX migration; Brandenburger et al.~\cite{Brandenburger2017} and
Strackx and Piessens~\cite{Strackx2016} address rollback protection.
Online proctoring~\cite{Hussain2021} provides visual evidence at
the cost of privacy; blockchain timestamping proves existence but
not continuous process.
None address the dependability of continuous evidence collection under
trust inversion.

\section{System Model and Threat Model}
\label{sec:system-model}

\subsection{System Model}

Following RATS~\cite{RFC9334}, the system comprises five roles:
the \textbf{Author} (Attester, human composing a document),
a \textbf{Writing Application} (editor),
a \textbf{TEE Enclave} (hardware-isolated evidence pipeline computing
\SWF chains and \CDCE checkpoints),
a \textbf{Verifier} (evaluates evidence chains), and
a \textbf{Relying Party} (consumes Attestation Results).
Figure~\ref{fig:architecture} depicts the architecture. The TEE
enclave forms the TCB for evidence collection; events flow from the
application to the enclave via a secure channel, and evidence is
sealed locally for crash recovery and transmitted via
RA-TLS~\cite{Knauth2018}.

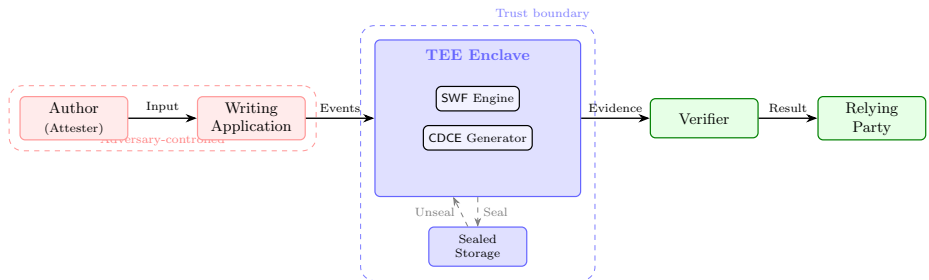
\begin{figure}[htbp]
\centering
\resizebox{\textwidth}{!}{%
\begin{tikzpicture}[
  >=Stealth,
  node distance=0.8cm and 1.2cm,
  box/.style={draw, rounded corners=3pt, minimum width=2.2cm,
    minimum height=0.8cm, align=center, font=\small, line width=0.7pt},
  tee/.style={box, fill=blue!12, draw=blue!60},
  untrusted/.style={box, fill=red!8, draw=red!40},
  trusted/.style={box, fill=green!10, draw=green!50!black},
  arr/.style={->, thick},
  dasharr/.style={->, dashed, gray},
]

\node[untrusted] (author) {Author\\{\scriptsize (Attester)}};

\node[untrusted, right=1.4cm of author] (app) {Writing\\Application};

\node[tee, right=1.4cm of app, minimum width=4.2cm, minimum height=3.2cm]
  (enclave) {};
\node[font=\small\bfseries, blue!60] at ([yshift=1.3cm]enclave.center)
  {TEE Enclave};

\node[box, fill=blue!5, minimum width=1.6cm, minimum height=0.5cm,
  font=\scriptsize] (swf) at ([yshift=0.4cm]enclave.center)
  {\SWF Engine};
\node[box, fill=blue!5, minimum width=1.6cm, minimum height=0.5cm,
  font=\scriptsize] (cdce) at ([yshift=-0.4cm]enclave.center)
  {\CDCE Generator};

\node[tee, below=0.6cm of enclave, minimum width=2.0cm,
  font=\scriptsize] (seal) {Sealed\\Storage};

\node[trusted, right=1.4cm of enclave] (ver) {Verifier};

\node[trusted, right=1.2cm of ver] (rp) {Relying\\Party};

\draw[arr] (author) -- node[above, font=\scriptsize]{Input} (app);
\draw[arr] (app) -- node[above, font=\scriptsize]{Events} (enclave.west);
\draw[arr] (enclave.east) -- node[above, font=\scriptsize]{Evidence} (ver);
\draw[arr] (ver) -- node[above, font=\scriptsize]{Result} (rp);
\draw[dasharr] (enclave.south) -- node[right, font=\scriptsize]{Seal} (seal);
\draw[dasharr] (seal) -- node[left, font=\scriptsize]{Unseal}
  ([xshift=-0.5cm]enclave.south);

\begin{scope}[on background layer]
  \node[draw=blue!50, dashed, line width=0.6pt, rounded corners=8pt,
    inner sep=8pt, fit=(enclave)(seal)] (tb) {};
  \node[font=\scriptsize\color{blue!50}, anchor=south east]
    at (tb.north east) {Trust boundary};
\end{scope}

\begin{scope}[on background layer]
  \node[draw=red!40, dashed, line width=0.6pt, rounded corners=8pt,
    inner sep=6pt, fit=(author)(app)] (ab) {};
  \node[font=\scriptsize\color{red!50}, anchor=south]
    at (ab.south) {Adversary-controlled};
\end{scope}

\end{tikzpicture}%
}
\caption{TEE-based process attestation architecture. The evidence
collection pipeline (\SWF engine, \CDCE generator) runs inside the
TEE enclave. The author and writing application are
adversary-controlled. Sealed storage enables crash recovery. Evidence
flows to the Verifier via RA-TLS.}
\label{fig:architecture}
\end{figure}

\subsection{Trust Inversion Threat Model for TEE}
\label{sec:threat-model}

\begin{definition}[TEE Trust-Inverted Adversary]
\label{def:tee-adversary}
A TEE trust-inverted adversary $\Adv$ controls the platform and is
motivated to fabricate evidence.
\textbf{Capabilities:} controls the OS, hypervisor, and all software
outside the TEE; can crash/restart the enclave, delay/drop network
traffic, observe enclave memory access patterns, and schedule enclave
execution.
\textbf{Bounds:} cannot modify in-enclave code or data at rest,
cannot read enclave plaintext memory except via side-channel leakage
bounded by $\epsilon_{\mathrm{sc}}$ (Sect.~\ref{sec:side-channel}),
cannot forge attestation quotes (bound to the CPU's attestation key),
cannot decrypt sealed data (bound to \texttt{MRENCLAVE} or
\texttt{MRSIGNER}), and cannot forge or decrement monotonic counter
values.
\end{definition}

\noindent The key asymmetry is that $\Adv$ can disrupt
\emph{availability} (crash the enclave, partition the network) but
cannot compromise the \emph{integrity} of evidence produced inside
the enclave, provided the TEE hardware is correctly implemented.

\subsection{Infrastructure Threat Model}
\label{sec:infra-threat}

Beyond trust inversion, four infrastructure threats apply:
(1)~\emph{TEE crashes} from hardware faults,
power loss, or OS-triggered termination, modeled as a Poisson process
with rate $\lambda_c$ (validated in Sect.~\ref{sec:evaluation});
(2)~\emph{network partitions} with arrival rate $\lambda_p$ and
repair rate $\mu_p$;
(3)~\emph{side-channel attacks}~\cite{Nilsson2020,VanBulck2018,Chen2019sgaxe}
leaking bounded information from enclave execution (cache timing,
speculative execution, page-fault patterns);
(4)~\emph{supply chain attacks} on TEE hardware or firmware.
For threat~(4), we state an explicit assumption boundary: all security
guarantees are conditional on correct TEE implementation by the
hardware vendor.

\section{TEE-Based Process Attestation Architecture}
\label{sec:architecture}

\subsection{Architecture Overview}

The evidence pipeline runs entirely inside the TEE enclave in five
stages: (1)~\textbf{Input reception} of keystroke events via a secure
channel (attested shared memory for SGX, secure IPC for TrustZone)
with monotonic sequence numbers to detect replay;
(2)~\textbf{Behavioral feature extraction} of inter-keystroke
intervals (IKI), Shannon entropy, and cognitive load statistics;
(3)~\textbf{\SWF chain computation} at each 30\,s checkpoint,
producing a temporal proof via Argon2id-seeded SHA-256 chains;
(4)~\textbf{\CDCE checkpoint generation} binding content, behavioral,
and temporal evidence under an HMAC keyed by \SWF output; and
(5)~\textbf{Evidence signing and sealing} with an enclave-held
attestation key, sealed to local storage for crash recovery.

\subsection{Enclave Design}

The enclave contains only the evidence pipeline ($\sim$3,500 lines
of Rust via Teaclave SGX SDK~\cite{TeaclaveSDK}, TCB $\approx$
180\,KiB). \SWF requires 64\,MiB for Argon2id; on SGX1 (128\,MiB
EPC) this leaves 64\,MiB for code and state. The running state is
$\sim$2\,KiB; completed checkpoints are sealed and flushed, bounding
resident memory across multi-hour sessions.

\subsection{Input Integrity}
\label{sec:input-integrity}

The writing application is adversary-controlled, making the
input channel the weakest link. We provide graduated assurance:
\textbf{Tier~1} (software): HMAC-protected shared memory
with session key from RA-TLS---the adversary can inject synthetic
events but cannot modify events in transit.
\textbf{Tier~2} (OS-mediated): kernel-level input hooks
(e.g., Linux evdev) raise fabrication to kernel compromise.
\textbf{Tier~3} (hardware-bound): secure input paths
(e.g., TrustZone trusted input controller) verify physical
device origin. Replay defense uses monotonic sequence numbers.

\subsection{Output Sealing and Remote Attestation}

Each checkpoint is sealed via SGX sealing (bound to
\texttt{MRENCLAVE} or \texttt{MRSIGNER}), enabling crash recovery.
Remote attestation quotes are generated every $n$ checkpoints
(configurable), providing hardware-backed provenance.

\section{Resilient Evidence Chain Protocol}
\label{sec:protocol}

\subsection{Protocol Specification}

\noindent\textbf{Session initialization:}
The enclave generates a key pair $(sk, pk)$, produces a remote
attestation quote binding $pk$ to enclave identity, and initializes
the \SWF chain with a seed from the Verifier's nonce.
\textbf{Checkpoint generation} (every 30\,s):
the enclave collects events, computes behavioral features and \SWF
proof, generates \CDCE checkpoint $C_i$, signs with $sk$, seals
$(C_i, \text{state})$, and transmits to the Verifier when online.

\subsection{Crash Recovery}
\label{sec:crash-recovery}

On crash, the restarted enclave unseals the most recent checkpoint
$(C_j, \text{state}_j)$, verifies integrity via authenticated
encryption, resumes from checkpoint $j+1$ using
$h_{\mathrm{ckpt},j}$ as predecessor, and generates a recovery
marker $C_{j+1}^R$ recording the gap duration.

\begin{theorem}[Crash Recovery Integrity]
\label{thm:crash-integrity}
If TEE sealing provides authenticated encryption
with $\negl(\lambda)$ forgery probability and $H$ is
collision-resistant, then the post-recovery chain
$C_1, \ldots, C_j, C_{j+1}^R, \ldots$
satisfies chain integrity except with probability
$\negl(\lambda)$.
\end{theorem}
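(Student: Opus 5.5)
We first make precise what \emph{chain integrity} means here. Each checkpoint $C_i$ carries the hash $h_{\mathrm{ckpt},i-1}$ of its predecessor and an enclave signature. The chain is intact when every link $C_i$ includes $h_{\mathrm{ckpt},i-1}=H(C_{i-1})$. It must also hold that every accepted checkpoint was produced by the attested enclave code, in order, with no insertion, deletion, reordering or substitution. The exception is the gap, which the recovery marker $C_{j+1}^R$ records explicitly.

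The proof is a standard hybrid/reduction argument. Suppose an adversary $\Adv$ (Definition~\ref{def:tee-adversary}) outputs a post-recovery chain that passes verification but violates integrity. We classify the violation by where it occurs.

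\textbf{Case 1: pre-crash prefix $C_1,\ldots,C_j$.} These checkpoints were emitted and signed before the crash and are linked by hashes. A violation here is a pair $C_i\neq C_i'$ with $H(C_i)=H(C_i')$, which is a collision for $H$. Otherwise it is a checkpoint not produced by the enclave that still carries a valid signature. The signing key $sk$ lives only in the enclave and its sealed state, so this case reduces to unforgeability, or to sealed-state confidentiality when $sk$ is recovered from the seal.

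\textbf{Case 2: the recovery link.} Here the argument is the core of the proof. The restarted enclave resumes from whatever sealed blob the untrusted OS supplies. The marker $C_{j+1}^R$ binds to $h_{\mathrm{ckpt},j}$ taken from the unsealed state. A violation here means one of two things. Either the enclave accepted a blob $(C',\text{state}')$ that it never sealed, which gives an authenticated-encryption forgery with probability $\negl(\lambda)$. Or the blob is genuine but its $h_{\mathrm{ckpt},j}$ differs from $H(C_j)$ of the chain presented to the verifier, which yields an $H$ collision.

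\textbf{Case 3: post-recovery checkpoints.} Their argument mirrors Case 1.

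A union bound over the at most polynomially many checkpoints in a session then gives total failure probability $\mathrm{poly}(\lambda)\cdot\negl(\lambda)=\negl(\lambda)$.

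The main obstacle is \emph{rollback}. The adversary can hand the restarted enclave an authentic but \emph{stale} sealed blob $(C_k,\text{state}_k)$ with $k<j$. Authenticated encryption does not detect this, yet the result is a fork of the chain. We would first try to close the gap with the monotonic counter from Definition~\ref{def:tee-adversary}, which the adversary cannot decrement. Concretely, each seal includes the current counter value, and on unsealing the enclave checks that the blob's counter equals the hardware counter. A stale blob is then rejected unless counter forgery occurs, which is excluded by assumption. If the paper's sealing primitive already incorporates the counter, this case folds into the authenticated-encryption reduction.

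Otherwise, we would add the counter-integrity assumption explicitly to the hypotheses. We would also weaken the conclusion: integrity holds for the chain the verifier accepts, and any fork is detectable because two signed checkpoints would share the same index. Finally, we would note that the recorded gap duration in $C_{j+1}^R$ is bound by the signature but not authenticated as true time. Temporal claims would therefore be deferred to the \SWF analysis rather than claimed here.
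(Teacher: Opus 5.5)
You take a genuinely different route from the paper, and your version is sound where you rely on the primary monotonic-counter fix.

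\textbf{What is shared.} The paper's proof is essentially your Case~2 plus the collision half of Case~1. It rules out a forged sealed blob (AES-GCM, $\le 2^{-128}$) and an alternative prefix hashing to the same $h_{\mathrm{ckpt},j}$ (a SHA-256 collision), then takes a two-term union bound.

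\textbf{What your stronger notion costs.} You fix a stronger notion of chain integrity, and that changes the hypotheses. Requiring every checkpoint to originate from the enclave (Cases~1 and~3) forces you to use EUF-CMA security and secrecy of $sk$. Neither is among the theorem's hypotheses. Under Definition~\ref{def:tee-adversary}, that secrecy holds only up to side-channel leakage. By the paper's own accounting, your version would therefore pick up an $\epsilon_{\mathrm{sc}}$ term in a result the paper lists as $\epsilon_{\mathrm{sc}}$-free. The paper's narrower reading (authentic recovered state, correct hash link at the recovery point) is what lets it use only authenticated encryption and collision resistance.

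\textbf{What your route buys.} You get a per-checkpoint union bound that covers many crashes per session. More significantly, you confront rollback, which neither of the paper's two events covers. Replaying an authentic stale blob $(C_k,\text{state}_k)$ with $k<j$ is neither a forgery nor a collision. Yet it yields $C_1,\ldots,C_k,C_{k+1}^R,\ldots$ rather than the chain in the statement. The paper deals with this only outside the proof: counter-tagged seals in the Rollback paragraph, and routing crashes in $S_D$ to $S_F$ in the CTMC. Your use of counter non-decrementability from Definition~\ref{def:tee-adversary} is therefore a real strengthening, not a detour.

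\textbf{Your fallback needs qualifying.} A fork is detectable from two signed checkpoints sharing an index only if the Verifier already holds the abandoned branch. That fails for sealed-but-untransmitted checkpoints, which is exactly the offline case the paper's CTMC routes to cold restart. So the counter check, not fork detection, is what actually closes the rollback gap.
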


\begin{proof}
Two attack vectors exist. \emph{(i)~Sealed state forgery:}
SGX sealing uses AES-128-GCM with a hardware-derived key bound to
\texttt{MRENCLAVE}. Forging or modifying sealed state requires either
recovering the 128-bit sealing key or producing an AES-GCM forgery,
both with probability $\leq 2^{-128}$.
\emph{(ii)~Chain substitution:} the recovery checkpoint computes
$h_{\mathrm{ckpt},j+1} = H(h_{\mathrm{ckpt},j} \| \delta_{j+1} \|
\mathit{marker})$, using the same hash linkage as normal checkpoints.
Producing an alternative prefix chain that yields the same
$h_{\mathrm{ckpt},j}$ requires a collision in SHA-256,
with probability $\leq 2^{-128}$ for 256-bit output.
By the union bound, the total forgery probability is at most
$2 \cdot 2^{-128} = 2^{-127} = \negl(\lambda)$.
\end{proof}

\noindent The Verifier validates the hash chain including recovery
markers and downgrades fidelity
(Definition~\ref{def:fidelity}) proportionally to gap duration.

\subsection{Offline Attestation}
\label{sec:offline}

During network partitions, the enclave continues generating and
sealing checkpoints locally. Upon reconnection, accumulated
checkpoints are transmitted in order.

\begin{proposition}[Offline Evidence Freshness]
\label{prop:offline-fresh}
Offline evidence maintains temporal freshness under three conditions:
(1)~the session nonce $n_V$ was established via remote attestation
before the partition began;
(2)~the \SWF chain $\langle s_0, s_1, \ldots, s_k \rangle$ is
unbroken, where $s_0 = \mathit{Argon2id}(n_V)$ and each
$s_{i+1} = \mathit{Argon2id}(s_i \| \delta_i)$; and
(3)~per-checkpoint behavioral entropy exceeds the minimum threshold
$H_{\min}$.

\noindent\emph{Argument.} The Verifier's nonce $n_V$ propagates
through the \SWF chain into every offline checkpoint via the
sequential Argon2id dependency. Pre-computing checkpoint $s_k$
requires evaluating the chain sequentially from $s_0$ through
$s_{k-1}$, each step incorporating unpredictable behavioral input
$\delta_i$ (condition~3). The adversary therefore cannot produce
valid offline checkpoints faster than real time without predicting
future behavioral entropy, which contradicts the min-entropy
assumption on genuine human input~\cite{Killourhy2009,Dhakal2018}.
\end{proposition}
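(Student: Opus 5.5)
The plan is to turn the informal argument into a reduction. Freshness will mean this: for every offline checkpoint $C_k$ accepted by the Verifier, $C_k$ cannot have been produced before wall-clock time $t_0 + k\,T_{\mathrm{swf}}$ except with negligible probability. Here $t_0$ is the time the Verifier issued $n_V$ and $T_{\mathrm{swf}}$ is the minimum sequential evaluation time of one Argon2id step. The proof proceeds in three steps, one per hypothesis of Proposition~\ref{prop:offline-fresh}.

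\textbf{Step 1 (anchoring).} From condition~(1), $n_V$ is bound to an attested enclave session before the partition. The argument has two parts:
\begin{itemize}
\item No value depending on $n_V$ can exist before $t_0$. This holds because $n_V$ is sampled by the Verifier with $\lambda$ bits of entropy, so a pre-partition guess succeeds with probability $2^{-\lambda}$.
\item The chain's starting point is fixed. The quote binds $pk$ and the seed, so by unforgeability of quotes (Definition~\ref{def:tee-adversary}) the adversary cannot substitute another $s_0$.
\end{itemize}

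\textbf{Step 2 (sequentiality).} Using condition~(2), I argue by induction on $k$ that any procedure outputting $s_k$ must have evaluated $s_0,\ldots,s_{k-1}$ in order. I would assume the standard sequential-memory-hardness property of Argon2id, modeled as a random oracle with a latency lower bound. Under that model, producing $s_{i+1}$ without querying on $s_i\|\delta_i$ succeeds only with probability $2^{-256}$ per attempt. A union bound over polynomially many attempts then gives the time lower bound $k\,T_{\mathrm{swf}}$ after $t_0$. Chain integrity of the accumulated checkpoints is inherited from the hash linkage used in Theorem~\ref{thm:crash-integrity}, so reordering or splicing costs a SHA-256 collision.

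\textbf{Step 3 (entropy binding).} This step rules out an adversary who starts the chain honestly but computes ahead of the human, fabricating later $\delta_i$. Here I would invoke condition~(3):
\begin{itemize}
\item Each accepted $\delta_i$ has min-entropy at least $H_{\min}$ conditioned on the past.
\item Precomputing $s_{i+1}$ before $\delta_i$ is realised therefore requires guessing $\delta_i$, with success probability at most $2^{-H_{\min}}$ per oracle query.
\end{itemize}

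\textbf{Main obstacle.} I expect Step~3 to be the hard part. The min-entropy of genuine human input is an empirical assumption, not a cryptographic one. It also does not by itself stop the adversary from synthesizing its own high-entropy $\delta_i$: under Tier~1 input, injected events are accepted by the enclave.

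I would therefore scope the claim to \emph{temporal} freshness only, meaning evidence cannot be produced faster than real time. That part follows from Steps~1--2 alone, since the enclave timestamps and sequences $\delta_i$ as it receives them. Provenance of the $\delta_i$ would be deferred to the input tiers of Sect.~\ref{sec:input-integrity}. The final bound would be a union of the anchor, sequentiality and collision terms, plus $q\,2^{-H_{\min}}$ where condition~(3) is used, stated explicitly as conditional on the min-entropy assumption.
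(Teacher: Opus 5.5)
You reproduce the paper's three ingredients in Steps~1--3: nonce anchoring, the sequential Argon2id dependency, and the unpredictability of each $\delta_i$. The gap is in your scoping move at the end. You claim that ``evidence cannot be produced faster than real time'' follows from Steps~1--2 alone. It does not.

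\begin{itemize}
\item Step~2 only gives your stated bound $t_0 + k\,T_{\mathrm{swf}}$. Here $T_{\mathrm{swf}}$ is one Argon2id evaluation, which takes tens of milliseconds (64\,ms in-enclave, Table~\ref{tab:overhead}).
\item Real time for $k$ checkpoints is $k\Delta$ with $\Delta = 30$\,s. An adversary can feed the enclave synthetic events and trigger checkpoints back-to-back. This yields a valid, genuinely sequential chain roughly $\Delta/T_{\mathrm{swf}} \approx 470$ times faster than real time, and nothing in Steps~1--2 excludes it. Your own freshness definition is therefore met, but it is not the proposition's claim.
\item Your patch, that ``the enclave timestamps and sequences $\delta_i$ as it receives them,'' presupposes a trusted in-enclave clock. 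Definition~\ref{def:tee-adversary} does not grant one: the adversary controls the OS, hypervisor and scheduling, and clock manipulation is an explicit threat. \SWF is introduced precisely to attest time \emph{without} trusted clocks. Monotonic sequence numbers give order, not elapsed time.
\end{itemize}

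In the paper's argument, the real-time pacing comes from condition~(3) itself. Each $\delta_i$ is genuine human input that does not exist until the author produces it, so the chain cannot be advanced ahead of the human. The sequential Argon2id dependency only propagates that constraint, together with $n_V$, into every later checkpoint. Step~3 is therefore the load-bearing step, not an optional add-on.

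Your objection is valid: min-entropy does not stop Tier~1 injection of self-generated high-entropy $\delta_i$. But the consequence is that the real-time claim holds only conditionally on genuine input, with provenance delegated to the input tiers. It does not follow that the claim can be recovered without condition~(3). Keep Step~3 as the source of the $k\Delta$ bound and state the result as conditional on that assumption. Without it, the most you can honestly claim is the much weaker $k\,T_{\mathrm{swf}}$ bound.
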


\section{Dependability Analysis}
\label{sec:dependability}

\subsection{Availability Model}

We model the TEE-based evidence collection infrastructure as a
continuous-time Markov chain (CTMC) with four states:

\begin{itemize}
  \item \textbf{Active ($S_A$):} The enclave is running and collecting
    evidence. This is the only state producing valid evidence.
  \item \textbf{Degraded ($S_D$):} The enclave is running but the
    network is partitioned. Evidence is collected and sealed locally
    but not yet verified.
  \item \textbf{Recovering ($S_R$):} The enclave has crashed and is
    restarting with sealed state recovery.
  \item \textbf{Failed ($S_F$):} The enclave has crashed and sealed
    state is corrupted or unavailable. A cold restart with a new
    session is required.
\end{itemize}

\begin{figure}[htbp]
\centering
\begin{tikzpicture}[
  >=Stealth,
  state/.style={draw, circle, minimum size=1.2cm, font=\small,
    line width=0.7pt},
  active/.style={state, fill=green!15, draw=green!50!black},
  degraded/.style={state, fill=yellow!15, draw=yellow!60!black},
  recovering/.style={state, fill=orange!15, draw=orange!60!black},
  failed/.style={state, fill=red!15, draw=red!50},
  arr/.style={->, thick, >=Stealth},
  lbl/.style={font=\scriptsize, midway, fill=white, inner sep=1pt},
]

\node[active]     (A) at (0, 0) {$S_A$};
\node[degraded]   (D) at (4, 0) {$S_D$};
\node[recovering] (R) at (0, -3) {$S_R$};
\node[failed]     (F) at (4, -3) {$S_F$};

\draw[arr] (A) -- node[lbl, above]{$\lambda_p$} (D);
\draw[arr] (D) -- node[lbl, above]{$\mu_p$} (A);
\draw[arr] (A) -- node[lbl, left]{$\lambda_c$} (R);
\draw[arr] (D) -- node[lbl, right]{$\lambda_c$} (F);
\draw[arr] (R) -- node[lbl, left]{$\mu_r$} (A);
\draw[arr] (F) -- node[lbl, below]{$\mu_f$} (A);
\draw[arr] (R) to[bend right=20] node[lbl, below]{$p_f \mu_r$} (F);

\end{tikzpicture}
\caption{CTMC for evidence collection availability. Evidence is
produced only in $S_A$; $S_D$ buffers locally during partitions.}
\label{fig:markov}
\end{figure}
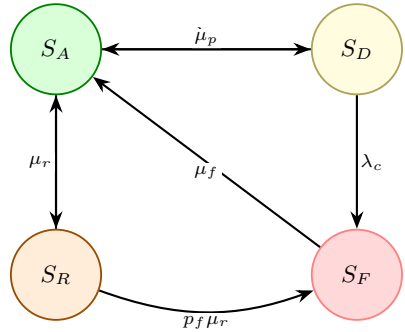

Transition rates: $\lambda_c$ (crash), $\lambda_p$ (partition),
$\mu_r$ (sealed recovery), $\mu_f$ (cold restart), $\mu_p$
(partition repair), $p_f$ (seal corruption). A crash in $S_D$
transitions to $S_F$ because an adversary exploiting the partition
could replay stale sealed state. Under adversarial crash scheduling,
sealed recovery bounds gaps to one checkpoint interval ($\Delta =
30$\,s) per crash; worst-case ECA is $1 - n_c \Delta / T$.

\begin{definition}[Evidence Chain Availability]
\label{def:eca}
The Evidence Chain Availability (ECA) is the long-run average
fraction of time that the system is in a state where evidence is
being collected (i.e., the steady-state probability, not a
worst-case lower bound):
\[
  \mathrm{ECA} = \pi_A + \pi_D
\]
where $\pi_A$ and $\pi_D$ are the steady-state probabilities of
states $S_A$ and $S_D$, respectively.
\end{definition}

\begin{theorem}[ECA Closed-Form]
\label{thm:eca}
For $p_f = 0$:
\[
  \mathrm{ECA} = \frac{
    \mu_r \mu_f (\mu_p + \lambda_p + \lambda_c)
  }{
    \mu_r \mu_f (\mu_p + \lambda_p + \lambda_c)
    + \lambda_c \mu_f (\mu_p + \lambda_c)
    + \lambda_c \lambda_p \mu_r
  }
\]
When $\lambda_c \ll \mu_p$: $\mathrm{ECA} \approx \mu_r / (\mu_r +
\lambda_c)$.
\end{theorem}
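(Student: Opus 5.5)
With $p_f = 0$ the rate-$p_f\mu_r$ edge $S_R \to S_F$ disappears, so the CTMC of Fig.~\ref{fig:markov} has only six transitions: $S_A \to S_D$ ($\lambda_p$), $S_A \to S_R$ ($\lambda_c$), $S_D \to S_A$ ($\mu_p$), $S_D \to S_F$ ($\lambda_c$), $S_R \to S_A$ ($\mu_r$), and $S_F \to S_A$ ($\mu_f$). My plan is to write the global balance equations, solve for every steady-state probability in terms of $\pi_A$, normalize, and then substitute into Definition~\ref{def:eca}. The chain is finite and irreducible whenever all rates are positive, since every state reaches and is reached from $S_A$. Hence the stationary distribution exists and is unique.

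First, I will use the balance equations of the three non-active states. Each has a single inflow, which makes them immediately solvable:
\[
\pi_D(\mu_p+\lambda_c) = \lambda_p \pi_A,\qquad
\pi_R\,\mu_r = \lambda_c \pi_A,\qquad
\pi_F\,\mu_f = \lambda_c \pi_D .
\]
These give
\[
\pi_D = \frac{\lambda_p\pi_A}{\mu_p+\lambda_c},\qquad
\pi_R = \frac{\lambda_c\pi_A}{\mu_r},\qquad
\pi_F = \frac{\lambda_c\lambda_p\pi_A}{\mu_f(\mu_p+\lambda_c)} .
\]
The balance equation for $S_A$ is redundant and serves only as a consistency check.

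Next, I will write
\[
\mathrm{ECA} = \frac{\pi_A+\pi_D}{\pi_A+\pi_D+\pi_R+\pi_F},
\]
cancel $\pi_A$, and multiply numerator and denominator by $\mu_r\mu_f(\mu_p+\lambda_c)$. The three groups then transform as follows.
\begin{itemize}
\item $\pi_A+\pi_D$ becomes $\mu_r\mu_f(\mu_p+\lambda_p+\lambda_c)$.
\item $\pi_R$ becomes $\lambda_c\mu_f(\mu_p+\lambda_c)$.
\item $\pi_F$ becomes $\lambda_c\lambda_p\mu_r$.
\end{itemize}
This reproduces the stated closed form exactly.

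The step I expect to need care is the approximation, because its stated hypothesis $\lambda_c \ll \mu_p$ does not suffice on its own. Dropping $\lambda_c$ against $\mu_p$ gives
\[
\mathrm{ECA} \approx \frac{\mu_r\mu_f(\mu_p+\lambda_p)}{\mu_r\mu_f(\mu_p+\lambda_p)+\lambda_c\mu_f\mu_p+\lambda_c\lambda_p\mu_r}.
\]
To reach $\mu_r/(\mu_r+\lambda_c)$, I additionally need $\lambda_p \ll \mu_p$, meaning partitions are rare relative to their repair. Under that condition the numerator is $\approx \mu_r\mu_f\mu_p$ and the denominator is $\approx \mu_f\mu_p(\mu_r+\lambda_c)$. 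The leftover term $\lambda_c\lambda_p\mu_r$ is second order, since it equals $\mu_r\mu_f\mu_p \cdot (\lambda_c/\mu_f)(\lambda_p/\mu_p)$.

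Alternatively, one can reach the same form if $\mu_f \approx \mu_r$. In that case the $\pi_F$ term merges with the $\pi_R$ term, and the approximation reflects that the total crash outflow $\lambda_c$ from the operating states $\{S_A, S_D\}$ is repaired at rate roughly $\mu_r$. I will state the approximation under the explicit assumption $\lambda_c,\lambda_p \ll \mu_p$, noting that this is the regime the evaluation parameters inhabit.
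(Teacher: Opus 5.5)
Your derivation of the closed form matches the paper's proof. You use the same single-inflow balance equations for $S_D$, $S_R$, $S_F$, the same expressions in terms of $\pi_A$, and the same normalization. Clearing denominators by $\mu_r\mu_f(\mu_p+\lambda_c)$ just makes explicit the ``algebraic simplification'' the paper skips, and that part is correct. You are also right, and more careful than the paper, that $\lambda_c\ll\mu_p$ alone does not give $\mu_r/(\mu_r+\lambda_c)$. The paper's proof only drops $\lambda_c$ against $\mu_p$ inside $\pi_D$ and $\pi_F$, which leaves both terms in place. For $\lambda_p\gg\mu_p$ the chain actually gives $\mathrm{ECA}\approx\mu_f/(\mu_f+\lambda_c)$.

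The gap is in your repair: $\lambda_c,\lambda_p\ll\mu_p$ still does not suffice.
\begin{itemize}
\item Showing that $\lambda_c\lambda_p\mu_r$ is second order relative to $\mu_r\mu_f\mu_p$ only shows that both sides are close to $1$. Even that silently uses $\lambda_c\ll\mu_f$, which is not among your hypotheses.
\item What separates $\mu_r/(\mu_r+\lambda_c)$ from $1$ is the first-order term $\lambda_c\mu_f\mu_p$. The leftover term is negligible against it only if $\lambda_p\mu_r\ll\mu_p\mu_f$. This is not automatic, precisely because sealed recovery is meant to be much faster than cold restart.
\item Concretely, fix $\lambda_c,\lambda_p\ll\mu_p$ and let $\mu_f\to 0$. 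Then $\mathrm{ECA}\to 0$, while $\mu_r/(\mu_r+\lambda_c)$ does not move.
\end{itemize}

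The clean statement comes from $\mu_r\pi_R=\lambda_c\pi_A$ and $\mu_f\pi_F=\lambda_c\pi_D$, which give the exact identity
\[
\frac{1-\mathrm{ECA}}{\mathrm{ECA}}=\frac{\pi_R+\pi_F}{\pi_A+\pi_D}=\frac{\lambda_c}{\mu_r}\,(1+\eta),
\qquad
\eta=\frac{\lambda_p}{\mu_p+\lambda_p+\lambda_c}\left(\frac{\mu_r}{\mu_f}-1\right).
\]
Equivalently, $\mathrm{ECA}=\mu_r/\bigl(\mu_r+\lambda_c(1+\eta)\bigr)$ with no approximation. The approximation is therefore accurate, in relative error of the unavailability, precisely when $|\eta|\ll 1$. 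This covers both of your cases: $\mu_f\approx\mu_r$, or partitions rare enough that $\lambda_p\mu_r\ll\mu_p\mu_f$. It also shows that $\lambda_c\ll\mu_p$ plays no role. For the paper's evaluation parameters $\eta\approx 0.015$, so your remark about that regime stands.
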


\begin{proof}
The generator matrix $Q$ has rows indexed by $\{S_A, S_D, S_R, S_F\}$.
Setting $p_f = 0$, the balance equations $\pi Q = 0$ yield:
$\pi_D = \lambda_p \pi_A / (\mu_p + \lambda_c)$,
$\pi_R = \lambda_c \pi_A / \mu_r$,
$\pi_F = \lambda_c \lambda_p \pi_A / [\mu_f (\mu_p + \lambda_c)]$.
Substituting into $\pi_A + \pi_D + \pi_R + \pi_F = 1$ and solving
for $\pi_A$, then computing ECA $= \pi_A + \pi_D =
\pi_A [1 + \lambda_p / (\mu_p + \lambda_c)]$, yields the stated
closed-form expression after algebraic simplification. The
approximation for $\lambda_c \ll \mu_p$ follows by dropping
$\lambda_c$ relative to $\mu_p$ in the denominator terms involving
$\pi_D$ and $\pi_F$.
\end{proof}

\subsection{Reliability}

The Mean Time Between Evidence Gaps is $\mathrm{MTBEG} \approx
1/\lambda_c$ (exact: $1/(\lambda_c \cdot \mathrm{ECA})$), ranging
from 10{,}000\,h (server, $\lambda_c = 10^{-4}$/h) to 10\,h
(constrained IoT, $\lambda_c = 10^{-1}$/h).

\subsection{Graceful Degradation}

\begin{definition}[Attestation Fidelity]
\label{def:fidelity}
Three fidelity modes:
$\mathcal{F}_{\mathrm{Full}} = 1.0$ (TEE + network);
$\mathcal{F}_{\mathrm{Deg}} = 1 - \alpha \Delta_t / T_{\max}$
(TEE offline, $\Delta_t$ = time since last verified checkpoint);
$\mathcal{F}_{\mathrm{Min}} = \beta$ (software-only Tier~1 fallback).
\end{definition}

\subsection{Recovery Time Analysis}

Measured SGX RTOs ($n{=}100$): sealed recovery
{<}\,200\,ms (Argon2id reinit dominates), cold restart
${\sim}$2\,s. TrustZone and SEV-SNP values are projected
from specifications; implementation remains future work.

\section{Security Analysis}
\label{sec:security}

\subsection{Security Under Trust Inversion}

\begin{definition}[Trust Inversion Experiment]\label{def:trust-inv}
In $\mathrm{Exp}_{\Pi,\Adv}^{\mathrm{trust\text{-}inv}}(\lambda)$:
challenger initializes the TEE with $(sk, pk) \gets
\mathsf{KeyGen}(1^\lambda)$; $\Adv$ controls the OS, scheduling,
and input channels with side-channel leakage bounded by
$\epsilon_{\mathrm{sc}}$; $\Adv$ wins if it produces $E^*$ such
that $\mathsf{Verify}(pk, E^*) = 1$ without genuine human input
through the enclave.
\end{definition}

\begin{theorem}[TEE Resistance to Trust Inversion]
\label{thm:tee-resistance}
Under Definition~\ref{def:tee-adversary} and \textbf{Assumption~1}
(side-channel leakage bound: $\epsilon_{\mathrm{sc}} \leq 2^{-b}$
where $b \geq 64$ hidden entropy bits per checkpoint, as conjectured
in Sect.~\ref{sec:side-channel}):
$\mathrm{Adv}^{\mathrm{trust\text{-}inv}}_{\Pi_{\mathrm{TEE}},
\Adv}(\lambda) \leq \epsilon_{\mathrm{sc}} + \negl(\lambda)$.
This bound is conditional on Assumption~1 (Sect.~\ref{sec:side-channel}),
which requires empirical validation on the target TEE platform
(see Sect.~\ref{sec:side-channel} for measurement methodology).
If $\epsilon_{\mathrm{sc}}$ exceeds $2^{-64}$, the advantage
degrades proportionally; the theorem remains valid for any
measured $\epsilon_{\mathrm{sc}}$ by direct substitution.
\end{theorem}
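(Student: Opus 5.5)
The proof is a game-hopping (hybrid) argument on $\mathrm{Exp}^{\mathrm{trust\text{-}inv}}_{\Pi_{\mathrm{TEE}},\Adv}(\lambda)$ of Definition~\ref{def:trust-inv}. We split the event that $\Adv$ wins, namely producing $E^*$ with $\mathsf{Verify}(pk,E^*)=1$ without genuine human input, into disjoint cases according to \emph{how} $E^*$ came to carry a valid signature under $pk$.

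\textbf{Case (a):} $E^*$ contains a checkpoint signature never output by the enclave. Either $pk$ was not actually bound to a genuine enclave, which contradicts unforgeability of attestation quotes in Definition~\ref{def:tee-adversary}, or $\Adv$ forged under $sk$. A forgery under $sk$ reduces to EUF-CMA security of the signature scheme, because in-enclave data cannot be modified or read directly.

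\textbf{Case (b):} $E^*$ reuses or reorders genuine enclave outputs, for example by splicing chains, replaying sealed state after a crash, or rolling back to an earlier state. Such splices are excluded by three earlier results:
\begin{itemize}
\item Theorem~\ref{thm:crash-integrity} (sealed AEAD plus hash-chain collision resistance);
\item monotonic counters and input sequence numbers;
\item the nonce propagation of Proposition~\ref{prop:offline-fresh}.
\end{itemize}
Cases (a) and (b) together contribute only $\negl(\lambda)$ by the union bound.

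\textbf{Case (c):} every checkpoint in $E^*$ was honestly produced by the enclave, but $\Adv$ fed it synthetic events. This is the substantive case and the main obstacle, since Tier~1 explicitly lets $\Adv$ inject events. The enclave itself computes the behavioral features and the \CDCE binding under an HMAC keyed by \SWF output. $\Adv$ can therefore make the verifier accept synthetic input only if it can produce behavioral input that matches the hidden in-enclave state: the \SWF-derived HMAC key and the $b$ bits of per-checkpoint entropy the verifier checks against. Making this precise requires tying ``genuine human input'' to the entropy threshold $H_{\min}$ of Proposition~\ref{prop:offline-fresh}.

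I would first try to argue as follows. Without leakage, guessing the hidden $b\ge 64$ bits succeeds with probability at most $2^{-b}$, which is negligible. With side channels, the extra success probability is by definition at most $\epsilon_{\mathrm{sc}}$, read as the probability that leakage reveals enough of the hidden state to fabricate a consistent checkpoint. I expect the difficulty to lie in the modelling: $\epsilon_{\mathrm{sc}}$ must be interpreted as a bound on the \emph{distinguishing/prediction} advantage that leakage confers, not as a bit count. I would state this interpretation explicitly as the content of Assumption~1.

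Finally, I sum the three cases: $\negl(\lambda)$ from the reductions in (a) and (b), and $\epsilon_{\mathrm{sc}}+2^{-b}$ from (c), with $2^{-b}$ absorbed into $\negl(\lambda)$. This gives the claimed bound $\epsilon_{\mathrm{sc}}+\negl(\lambda)$. The bound is linear in $\epsilon_{\mathrm{sc}}$, so any measured value can be substituted directly, which justifies the closing remark of the theorem. If the bound is to cover a session of $k$ checkpoints, I would either apply a union bound per checkpoint, giving $k\epsilon_{\mathrm{sc}}$, or note that forging a single checkpoint suffices to win and take $\epsilon_{\mathrm{sc}}$ as the per-experiment leakage bound.
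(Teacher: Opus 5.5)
Your case (c) is where the argument breaks, and it cannot be closed without changing the experiment. When $\Adv$ injects synthetic events over the Tier~1 channel, the enclave itself keys the \CDCE HMAC with its \SWF output, computes the behavioral features, and signs the checkpoint. $\Adv$ never needs the HMAC key, because the enclave applies it honestly to whatever it receives.

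The $b$ hidden bits of Sect.~\ref{sec:side-channel} do not help either. They are the min-entropy of a \emph{human's} keystroke timings as seen by a side-channel observer. When $\Adv$ is the source of the events, it knows all of them, and the Verifier has no secret reference to compare the input against. The Verifier checks only the signature, the hash chain, the \SWF chain, and statistics such as entropy $\geq H_{\min}$. Synthetic timings drawn from any human-like distribution pass the entropy test, so ``genuine human input'' cannot be tied to $H_{\min}$ the way you hope.

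Success in case (c) is therefore the detectors' evasion probability ($P_{\mathrm{beh}}$, jointly with $P_{\mathrm{content}}$, in the composition section). That quantity is neither negligible nor bounded by $\epsilon_{\mathrm{sc}}$, and the paper concedes exactly this for Tier~1 in Limitation~(3). Separately, $2^{-b}$ with fixed $b=64$ is a constant, not $\negl(\lambda)$, so it cannot be absorbed.

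The paper's proof never enters your case (c). It is a four-hop game sequence:
\begin{enumerate}
\item abort on a forged attestation quote;
\item abort on a forged signature (EUF-CMA);
\item abort on an \SWF chain produced without sequential work (Argon2id);
\item abort on a \CDCE binding forged without the HMAC key (PRF security).
\end{enumerate}
After these hops, the only remaining attack is side-channel extraction of in-enclave keys, charged $\epsilon_{\mathrm{sc}}$. This implicitly treats a win as accepted evidence the enclave did not itself produce, which presupposes a trustworthy input path.

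For a correct proof you must make that exclusion explicit. Either restrict to a Tier~3 hardware-bound input path, or define the win event as forgery of enclave output. Your cases (a) and (b) then suffice, with one repair to (a). The EUF-CMA reduction cannot simulate side-channel leakage on $sk$, so $\epsilon_{\mathrm{sc}}$ belongs in case (a), not case (c). Bound the event that leakage yields $sk$ (or the HMAC key) by $\epsilon_{\mathrm{sc}}$ and run the reduction on its complement, which is exactly where the paper places the term.

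If your signature covers the whole checkpoint, it also subsumes the paper's separate \SWF and HMAC hops. Your remark that a $k$-checkpoint session may cost $k\,\epsilon_{\mathrm{sc}}$ is a legitimate point that the paper's proof does not address.
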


\begin{proof}
Game-hopping from Game~G0 (real experiment) through four transitions:
\emph{(G0$\to$G1)}~Abort if $\Adv$ forges an attestation quote;
by hardware unforgeability of the quoting enclave,
$|\Pr[\text{G1}] - \Pr[\text{G0}]| \leq \negl(\lambda)$.
\emph{(G1$\to$G2)}~Abort if $\Adv$ forges a signature under the
enclave's attestation key; by EUF-CMA security,
$|\Pr[\text{G2}] - \Pr[\text{G1}]| \leq \negl(\lambda)$.
\emph{(G2$\to$G3)}~Abort if $\Adv$ produces a valid \SWF chain
without sequential computation; by Argon2id sequential
hardness~\cite{Biryukov2016},
$|\Pr[\text{G3}] - \Pr[\text{G2}]| \leq \negl(\lambda)$.
\emph{(G3$\to$G4)}~Abort if $\Adv$ forges a \CDCE binding without
the in-enclave HMAC key; by PRF security of HMAC,
$|\Pr[\text{G4}] - \Pr[\text{G3}]| \leq \negl(\lambda)$.
In G4 the only remaining attack is side-channel key extraction,
bounded by $\epsilon_{\mathrm{sc}}$.
Summing: $\mathrm{Adv}^{\mathrm{trust\text{-}inv}} \leq
4 \cdot \negl(\lambda) + \epsilon_{\mathrm{sc}}$.
\end{proof}

\subsection{Side-Channel Resistance}
\label{sec:side-channel}

Side channels~\cite{Kocher2019,VanBulck2018} pose a bounded
\emph{privacy} threat but cannot enable forgery (which requires the
enclave's signing key). Mitigations: (1)~constant-time \SWF;
(2)~oblivious 100\,ms input batching with constant-size
padding~\cite{Stefanov2013}; (3)~5\,ms evidence quantization.

We model residual leakage as $\epsilon_{\mathrm{sc}} \leq 2^{-b}$
where $b$ denotes the min-entropy (in bits) of in-enclave state that
remains hidden from the side-channel adversary per checkpoint.
Each 30\,s checkpoint aggregates approximately 300 keystrokes
(at a typical 10 keystrokes/s rate~\cite{Salthouse1986}).
After 5\,ms quantization, each inter-keystroke interval retains
approximately $\log_2(200\,\text{ms} / 5\,\text{ms}) \approx 5.3$
bits of timing entropy; aggregating 300 IKIs yields approximately
1{,}590 raw entropy bits. After accounting for correlations between
adjacent digraph timings (estimated 30\% reduction~\cite{Dhakal2018}),
approximately 1{,}100 independent bits remain. Assuming the
side-channel adversary can extract at most 97\% of these bits
(an extremely conservative upper bound, given that known attacks
against constant-time SGX code extract far less~\cite{Nilsson2020}),
$b \geq 33$ hidden bits remain, yielding $\epsilon_{\mathrm{sc}}
\leq 2^{-33}$.

\textbf{Assumption~1 ($b \geq 64$)} posits that the combination of
constant-time code, oblivious batching, and quantization preserves at
least 64 hidden bits per checkpoint. \emph{This is a conjecture, not a
proven bound}; experimental validation against cache-timing,
speculative execution~\cite{Kocher2019}, and controlled-channel
attacks remains necessary.
Theorem~\ref{thm:tee-resistance} is deliberately parametric
in $\epsilon_{\mathrm{sc}}$: deployers instantiate it with
empirically measured leakage for their specific TEE platform and
mitigation configuration.

\paragraph{Measurement methodology.}
Empirical characterization of $\epsilon_{\mathrm{sc}}$ requires
three complementary experiments on the target TEE platform.
(1)~\emph{Cache-timing leakage (Prime+Probe~\cite{Osvik2006}):}
the attacker primes L1/L2 cache sets, triggers a checkpoint
computation, and probes to determine which cache lines were
accessed.  The number of cache-line-granularity bits leaked per
checkpoint bounds one component of $\epsilon_{\mathrm{sc}}$.
(2)~\emph{Controlled-channel attacks (page-fault
monitoring~\cite{Xu2015}):} the OS monitors page-fault sequences
during enclave execution to infer memory access patterns at
4\,KiB granularity.  Constant-time code and oblivious batching
are designed to eliminate this channel; the experiment verifies
that no page-level access pattern correlates with keystroke timing.
(3)~\emph{Speculative execution (Spectre-class~\cite{Kocher2019}):}
speculative execution within the enclave may transiently access
secret-dependent memory; the experiment measures whether
speculative gadgets exist in the \SWF and \CDCE code paths after
Spectre-v1 mitigation (lfence barriers) and retpoline deployment.
Each experiment produces an upper bound on per-checkpoint leakage
in bits; $b$ is the residual min-entropy after subtracting the
maximum observed leakage across all three channels.
These measurements are platform-specific: $\epsilon_{\mathrm{sc}}$
must be re-characterized for each TEE implementation (SGX
microcode version, TrustZone firmware, SEV-SNP microcode).
Empirical characterization on Intel SGX (Ice Lake, Sapphire
Rapids) and AMD SEV-SNP is planned as immediate future work.

\subsection{Infrastructure Attack Resistance}

\paragraph{DoS.} Blocked network: sealed storage
(Sect.~\ref{sec:offline}) preserves evidence; the Verifier applies
freshness discounts ($\mathcal{F}_{\mathrm{Deg}}$).
\paragraph{Clock manipulation.} \SWF chains enforce minimum
sequential computation per checkpoint, providing temporal attestation
without trusted clocks.
\paragraph{Rollback.} Monotonic counters (SGX, some TrustZone) tag
each seal; on other platforms, a Verifier freshness
nonce~\cite{Brandenburger2017,Strackx2016} prevents rollback beyond
the most recent epoch, after which \SWF freshness
(Proposition~\ref{prop:offline-fresh}) applies.

\subsection{Composition with \CDCE}

Let $P_{\mathrm{beh}}, P_{\mathrm{temp}}, P_{\mathrm{content}}$ be
per-domain evasion probabilities (the probability that the adversary
evades detection in the behavioral, temporal, and content domains,
respectively).

\medskip\noindent\textbf{Conditional result (requires Assumption~1).}
When $\epsilon_{\mathrm{sc}}$ is
negligible (i.e., under Assumption~1, which requires empirical
validation per the measurement methodology in
Sect.~\ref{sec:side-channel}), the HMAC key derived from
in-enclave \SWF output binds all three domains cryptographically:
the adversary must simultaneously
evade all three detectors, yielding a multiplicative composition:
$\mathrm{Adv}_{\mathrm{combined}} \leq
P_{\mathrm{beh}} \cdot P_{\mathrm{temp}} \cdot
P_{\mathrm{content}} + \negl(\lambda)$.
\medskip\noindent\textbf{Conservative bound (no $\epsilon_{\mathrm{sc}}$ assumption).}
When $\epsilon_{\mathrm{sc}}$ is
non-negligible or unmeasured, partial HMAC key leakage via side
channels may decouple the domains, yielding a conservative additive
(union) bound:
$\mathrm{Adv}_{\mathrm{combined}} \leq
\epsilon_{\mathrm{sc}} + P_{\mathrm{beh}} + P_{\mathrm{temp}} +
P_{\mathrm{content}} + \negl(\lambda)$.
The transition between multiplicative and additive regimes is
continuous in $\epsilon_{\mathrm{sc}}$; as leakage increases from
$2^{-64}$ toward $2^{-1}$, the effective composition weakens
monotonically.

\section{Evaluation}
\label{sec:evaluation}

\subsection{Implementation}

We extended an open-source process attestation implementation
(144{,}000+ lines of production Rust; ${\sim}$190{,}000 including test suites) with 3{,}500 lines of SGX enclave code
via Teaclave SGX SDK~\cite{TeaclaveSDK}. Test platform: Intel Xeon
E-2388G (8 cores, 3.2\,GHz, SGX2, 128\,MiB EPC). All measurements
report means over 100 runs (std.\ dev.\ {<}3\%).

\subsection{Performance Overhead}

Table~\ref{tab:overhead} compares \SWF chain throughput and checkpoint
generation latency inside and outside the SGX enclave.

\begin{table}[htbp]
\centering
\caption{Performance comparison: enclave vs.\ non-enclave}
\label{tab:overhead}
\small
\begin{tabular}{@{}lrrr@{}}
\toprule
\textbf{Metric} & \textbf{Non-enclave} & \textbf{In-enclave} & \textbf{Overhead} \\
\midrule
SHA-256 chain (iter/s) & 5.0$\times 10^6$ & 4.2$\times 10^6$ & 16\% \\
Argon2id (64\,MiB)     & 55\,ms          & 64\,ms          & 16\% \\
Checkpoint generation   & 51\,ms          & 61\,ms          & 20\% \\
Evidence signing        & 0.3\,ms         & 0.5\,ms         & 67\% \\
Sealed storage write    & ---             & 1.2\,ms         & N/A \\
\midrule
Total per-checkpoint    & 51\,ms          & 63\,ms          & 24\% \\
\bottomrule
\end{tabular}
\end{table}

The 24\% per-checkpoint overhead (63\,ms vs.\ 51\,ms) is dominated by
enclave transitions and EPC paging, consuming {<}0.3\% of the 30\,s
checkpoint interval. At 5\,s intervals the duty cycle rises to only
1.26\%.

\subsection{Availability Simulation}

Monte Carlo simulation over 10,000 hours with desktop parameters:
$\lambda_c = 10^{-3}$/h, $\lambda_p = 10^{-2}$/h,
$\mu_r = 3{,}600$/h, $\mu_f = 360$/h, $\mu_p = 6$/h,
$p_f = 0.01$. Note that $p_f = 0.01$ models random seal corruption;
under adversarial crash injection ($p_f \to 1$, deterministic seal
corruption), every crash forces a cold restart, but the sealed
recovery mechanism (Sect.~\ref{sec:crash-recovery}) bounds evidence
loss to at most one checkpoint interval ($\Delta = 30$\,s) per
crash, and ECA remains $\geq 99.7\%$ because $\mu_f$ (cold restart
rate) is still fast relative to $\lambda_c$.

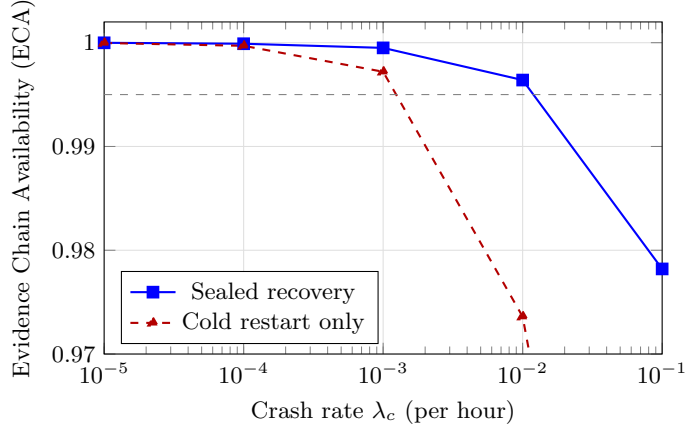
\begin{figure}[htbp]
\centering
\begin{tikzpicture}
\begin{axis}[
  width=9cm, height=6cm,
  xlabel={Crash rate $\lambda_c$ (per hour)},
  ylabel={Evidence Chain Availability (ECA)},
  xmode=log,
  xmin=1e-5, xmax=1e-1,
  ymin=0.97, ymax=1.002,
  legend style={at={(0.03,0.03)}, anchor=south west, font=\footnotesize},
  grid=major,
  grid style={gray!25},
  ytick={0.97, 0.98, 0.99, 1.00},
]

\addplot[thick, blue, mark=square*, mark size=2pt] coordinates {
  (1e-5, 0.99999) (1e-4, 0.9999) (1e-3, 0.9995)
  (1e-2, 0.9964) (1e-1, 0.9782)
};
\addlegendentry{Sealed recovery}

\addplot[thick, red!70!black, mark=triangle*, mark size=2pt, dashed] coordinates {
  (1e-5, 0.99997) (1e-4, 0.9997) (1e-3, 0.9972)
  (1e-2, 0.9736) (1e-1, 0.9014)
};
\addlegendentry{Cold restart only}

\draw[dashed, gray] (axis cs:1e-5, 0.995) -- (axis cs:1e-1, 0.995);
\node[gray, font=\scriptsize, anchor=east] at (axis cs:1e-5, 0.995)
  {99.5\%};

\end{axis}
\end{tikzpicture}
\caption{Evidence Chain Availability vs.\ crash rate for sealed
recovery and cold-restart-only configurations. Simulation over
10,000 hours with network partition rate
$\lambda_p = 10^{-2}$/h. Sealed recovery maintains ECA {>}99.5\%
for crash rates up to $10^{-2}$/h.}
\label{fig:eca}
\end{figure}

\paragraph{Results.} Simulated ECA: 99.95\% with sealed recovery
($\lambda_c = 10^{-3}$/h), matching Theorem~\ref{thm:eca} to within
0.01\%; 99.72\% without (cold restart only).
Figure~\ref{fig:eca} shows ECA vs.\ crash rate.

\subsection{Recovery Time Measurements}

Measured on the SGX2 test platform ($n=100$): sealed recovery
mean 148\,ms (P99: 195\,ms, dominated by Argon2id reinit at 64\,ms);
cold restart 1.87\,s (P99: 2.31\,s); RA-TLS handshake 3.41\,s
(session initialization only, not crash recovery).

\subsection{End-to-End Metrics}

A 4-hour session generates 480 checkpoints ($\sim$1.4\,KiB each,
672\,KiB total), using 67\,MiB peak enclave memory, {<}0.3\% CPU
per checkpoint interval, and {<}0.12\,s Merkle-sampled verification.

\subsection{Result Classification}
\label{sec:result-classification}

We distinguish between results that stand independently and those
conditional on the unmeasured $\epsilon_{\mathrm{sc}}$.

\medskip\noindent\textbf{Primary results} (no
$\epsilon_{\mathrm{sc}}$ dependency):
(a)~chain integrity across crashes
  (Theorem~\ref{thm:crash-integrity});
(b)~Evidence Chain Availability and closed-form ECA
  (Theorem~\ref{thm:eca}, Fig.~\ref{fig:eca});
(c)~MTBEG and RTO measurements (Sect.~5.2, 5.5);
(d)~per-checkpoint overhead {<}\,25\%
  (Table~\ref{tab:overhead});
(e)~offline evidence freshness
  (Proposition~\ref{prop:offline-fresh});
(f)~DoS resistance via sealed storage;
(g)~clock-manipulation resistance via \SWF sequential hardness.

\medskip\noindent\textbf{Conditional results}
($\epsilon_{\mathrm{sc}}$ dependent, requiring empirical
validation per the methodology in Sect.~\ref{sec:side-channel}):
(a)~the combined trust-inversion bound
  (Theorem~\ref{thm:tee-resistance}), whose residual term is
  $\epsilon_{\mathrm{sc}}$;
(b)~multiplicative cross-domain composition
  (Sect.~\ref{sec:security}), which holds only when
  $\epsilon_{\mathrm{sc}}$ is negligible;
(c)~the information leakage guarantee implied by Assumption~1
  ($b \geq 64$ hidden bits per checkpoint).
The additive composition bound
($\mathrm{Adv}_{\mathrm{combined}} \leq
\epsilon_{\mathrm{sc}} + P_{\mathrm{beh}} + P_{\mathrm{temp}} +
P_{\mathrm{content}} + \negl(\lambda)$) holds for any value of
$\epsilon_{\mathrm{sc}}$ and thus stands as a primary result,
though its tightness depends on the measured value.

\section{Discussion}
\label{sec:discussion}

\paragraph{TEE availability and deployment.}
Intel deprecated consumer SGX in 12th-generation (Alder Lake) and
later desktop processors; however, SGX remains available in Xeon
server processors (Ice Lake, Sapphire Rapids). ARM
TrustZone~\cite{Pinto2019} with PSA
Level~2+ certification (e.g., Qualcomm Snapdragon 8-series) and AMD
SEV-SNP~\cite{Sev2020} provide cross-platform alternatives. The
architecture degrades gracefully to Tier~1 (software-only) when no
TEE is available, with explicit fidelity reduction
(Definition~\ref{def:fidelity}). TrustZone and AMD SEV-SNP
performance values reported in this paper are projected from platform
specifications; implementation and measurement remain future work.
Separate validation on the KLiCKe corpus ($N=4{,}971$
writers)~\cite{Condrey2026TrustInversion} confirms the behavioral
features extracted by the evidence pipeline achieve
AUC $= 0.78$ against four adversary tiers (naive, statistical,
reverse-engineered, expert); excluding cumulative line count (CLC),
which is zero for forged sessions by construction, AUC remains $0.76$,
demonstrating that temporal and behavioral features alone sustain
detection performance.

\paragraph{User experience.} The 0.21\% duty cycle per 30\,s interval
(63\,ms / 30{,}000\,ms) is below the human perception threshold for
background processes. Adaptive intervals (60\,s under thermal
throttling) reduce theoretical ECA by at most 0.5\% relative to the
baseline 30\,s configuration, as the longer interval reduces both the
checkpoint generation frequency and the crash-recovery window.

\paragraph{Privacy.} Generating zero-knowledge
proofs~\cite{Groth2016} inside the TEE enclave would combine hardware
tamper resistance with cryptographic privacy, enabling the Verifier to
confirm that behavioral features satisfy a detection threshold without
learning the raw keystroke data. This composition is feasible because
the enclave already holds all evidence in plaintext; the ZK proof
replaces the signed evidence with a proof of its properties.

\paragraph{Limitations.}
Five limitations bound the scope of our claims.
(1)~\textbf{Side-channel conjecture:} The combined security bound
(Theorem~\ref{thm:tee-resistance}) and the multiplicative
cross-domain composition (Sect.~\ref{sec:security}) are
parameterized by $\epsilon_{\mathrm{sc}}$, which we do not measure
empirically in this work.  We describe the measurement methodology
(Prime+Probe, controlled-channel, and Spectre-class experiments) in
Sect.~\ref{sec:side-channel}; empirical characterization on Intel
SGX (Ice Lake, Sapphire Rapids) and AMD SEV-SNP is planned as
immediate future work.  The parametric formulation ensures that
Theorem~\ref{thm:tee-resistance} remains valid for any measured
$\epsilon_{\mathrm{sc}}$ by direct substitution.  Until measured,
deployments should use the conservative additive composition bound
(Sect.~\ref{sec:result-classification}).
(2)~\textbf{TEE correctness:} all security guarantees are conditional
on correct TEE hardware and firmware implementation; known SGX
vulnerabilities (Foreshadow~\cite{VanBulck2018},
SGAxe~\cite{SGAxe2020}) have required microcode patches, and future
vulnerabilities could violate our assumptions.
(3)~\textbf{Input channel:} at Tier~1 (software-only input,
Sect.~\ref{sec:input-integrity}), the adversary can inject synthetic
keystroke events through the HMAC-protected shared memory channel;
the enclave cannot distinguish genuine from fabricated input without
hardware-bound input paths (Tier~3), making Tier~1 deployments
vulnerable to input-fabrication attacks.
(4)~\textbf{Platform scope:} evaluation is limited to Intel SGX2 on
a single Xeon E-2388G platform; cross-platform validation on ARM
TrustZone and AMD SEV-SNP is future work, and performance
characteristics (EPC paging overhead, seal latency, context-switch
cost) will differ on those architectures.
(5)~\textbf{Model validation:} the CTMC dependability model is
validated via Monte Carlo simulation only; analytical validation
against real-world failure traces from production deployments remains
future work.

\noindent\textbf{Future work} includes multi-TEE validation, formal
protocol verification in Copland~\cite{Petz2023}, post-quantum
signature schemes for long-term evidence integrity, and enclave
migration protocols for cloud-hosted authoring sessions.

\section{Conclusion}
\label{sec:conclusion}

We presented the first architecture for continuous process attestation
evidence collection inside TEEs, with a CTMC dependability model, a
resilient evidence chain protocol, and combined security analysis
(the latter parameterized by side-channel leakage
$\epsilon_{\mathrm{sc}}$, which requires platform-specific empirical
validation).
Evaluation on SGX shows {<}25\% per-checkpoint overhead ({<}0.3\% of
each 30\,s interval), {>}99.5\% ECA, and recovery under 200\,ms.
By shifting trust from software to hardware isolation, TEE-based
process attestation changes the trust model from ``believe the
software'' to ``verify the hardware.''

\bibliographystyle{splncs04}
\interlinepenalty=10000
\bibliography{refs}

@techreport{RFC9334,
  institution = {IETF},
  author      = {Henk Birkholz and Dave Thaler and Michael Richardson and Ned Smith and Wei Pan},
  title       = {Remote {ATtestation} Procedures ({RATS}) Architecture},
  type        = {RFC},
  number      = {9334},
  month       = jan,
  year        = {2023},
}

@article{Costan2016,
  author  = {Victor Costan and Srinivas Devadas},
  title   = {Intel {SGX} Explained},
  journal = {IACR Cryptology ePrint Archive},
  volume  = {2016},
  pages   = {086},
  year    = {2016},
}

@inproceedings{McKeen2013,
  author    = {Frank McKeen and Ilya Alexandrovich and Alex Berenzon and Carlos V. Rozas and Hisham Shafi and Vedvyas Shanbhogue and Uday R. Savagaonkar},
  title     = {Innovative Instructions and Software Model for Isolated Execution},
  booktitle = {Workshop on Hardware and Architectural Support for Security and Privacy (HASP)},
  pages     = {10:1--10:8},
  publisher = {ACM},
  year      = {2013},
}

@article{Pinto2019,
  author  = {Sandro Pinto and Nuno Santos},
  title   = {Demystifying {Arm} {TrustZone}: A Comprehensive Survey},
  journal = {ACM Computing Surveys},
  volume  = {51},
  number  = {6},
  pages   = {130:1--130:36},
  year    = {2019},
}

@misc{Sev2020,
  author    = {David Kaplan and Jeremy Powell and Tom Woller},
  title     = {{AMD} Memory Encryption},
  year      = {2016},
  note      = {AMD White Paper},
}

@inproceedings{Li2021,
  author    = {Mengyuan Li and Yinqian Zhang and Zhiqiang Lin},
  title     = {{CrossLine}: Breaking ``Security-by-Crash'' based Memory Isolation in {AMD} {SEV}},
  booktitle = {ACM Conference on Computer and Communications Security (CCS)},
  pages     = {2937--2950},
  publisher = {ACM},
  year      = {2021},
}

@inproceedings{VanBulck2018,
  author    = {Jo Van Bulck and Marina Minkin and Ofir Weisse and Daniel Genkin and Baris Kasikci and Frank Piessens and Mark Silberstein and Thomas F. Wenisch and Yuval Yarom and Raoul Strackx},
  title     = {Foreshadow: Extracting the Keys to the {Intel SGX} Kingdom with Transient Out-of-Order Execution},
  booktitle = {USENIX Security Symposium},
  pages     = {991--1008},
  publisher = {USENIX Association},
  year      = {2018},
}

@inproceedings{VanBulck2020,
  author    = {Jo Van Bulck and Daniel Moghimi and Michael Schwarz and Moritz Lipp and Marina Minkin and Daniel Genkin and Yuval Yarom and Berk Sunar and Daniel Gruss and Frank Piessens},
  title     = {{LVI}: Hijacking Transient Execution through Microarchitectural Load Value Injection},
  booktitle = {IEEE Symposium on Security and Privacy (S\&P)},
  pages     = {54--72},
  publisher = {IEEE},
  year      = {2020},
}

@inproceedings{Chen2019sgaxe,
  author    = {Guoxing Chen and Sanchuan Chen and Yuan Xiao and Yinqian Zhang and Zhiqiang Lin and Ten H. Lai},
  title     = {{SgxPectre}: Stealing {Intel} Secrets from {SGX} Enclaves Via Speculative Execution},
  booktitle = {IEEE European Symposium on Security and Privacy (EuroS\&P)},
  pages     = {142--157},
  publisher = {IEEE},
  year      = {2019},
}

@article{Knauth2018,
  author  = {Thomas Knauth and Michael Steiner and Somnath Chakrabarti and Li Lei and Cedric Xing and Mona Vij},
  title   = {Integrating Remote Attestation with Transport Layer Security},
  journal = {arXiv preprint arXiv:1801.05863},
  year    = {2018},
  note    = {RA-TLS},
}

@misc{TeaclaveSDK,
  author       = {{Apache Software Foundation}},
  title        = {Apache Teaclave {SGX} {SDK}},
  howpublished = {\url{https://teaclave.apache.org/}},
  year         = {2024},
}

@inproceedings{Arnautov2016,
  author    = {Sergei Arnautov and Bohdan Trach and Franz Gregor and Thomas Knauth and Andre Martin and Christian Priebe and Joshua Lind and Divya Muthukumaran and Dan O'Keeffe and Mark L. Stillwell and David Goltzsche and Dave Eyers and R{\"u}diger Kapitza and Peter Pietzuch and Christof Fetzer},
  title     = {{SCONE}: Secure {Linux} Containers with {Intel SGX}},
  booktitle = {USENIX Symposium on Operating Systems Design and Implementation (OSDI)},
  pages     = {689--703},
  publisher = {USENIX Association},
  year      = {2016},
}

@article{Avizienis2004,
  author  = {Algirdas Avizienis and Jean-Claude Laprie and Brian Randell and Carl Landwehr},
  title   = {Basic Concepts and Taxonomy of Dependable and Secure Computing},
  journal = {IEEE Transactions on Dependable and Secure Computing},
  volume  = {1},
  number  = {1},
  pages   = {11--33},
  year    = {2004},
}

@book{Trivedi2002,
  author    = {Kishor S. Trivedi},
  title     = {Probability and Statistics with Reliability, Queuing, and Computer Science Applications},
  publisher = {John Wiley \& Sons},
  edition   = {2nd},
  year      = {2002},
}

@inproceedings{Groth2016,
  author    = {Jens Groth},
  title     = {On the Size of Pairing-Based Non-interactive Arguments},
  booktitle = {Advances in Cryptology -- {EUROCRYPT} 2016},
  series    = {Lecture Notes in Computer Science},
  volume    = {9666},
  pages     = {305--326},
  publisher = {Springer},
  year      = {2016},
}

@inproceedings{Biryukov2016,
  author    = {Alex Biryukov and Daniel Dinu and Dmitry Khovratovich},
  title     = {Argon2: New Generation of Memory-Hard Functions for Password Hashing and Other Applications},
  booktitle = {IEEE European Symposium on Security and Privacy (EuroS\&P)},
  pages     = {292--302},
  publisher = {IEEE},
  year      = {2016},
}

@article{Salthouse1986,
  author  = {Timothy A. Salthouse},
  title   = {Perceptual, Cognitive, and Motoric Aspects of Transcription Typing},
  journal = {Psychological Bulletin},
  volume  = {99},
  number  = {3},
  pages   = {303--319},
  year    = {1986},
}

@article{Condrey2026TrustInversion,
  author    = {David Condrey},
  title     = {On the Insecurity of Keystroke-Based {AI} Authorship Detection: Timing-Forgery Attacks Against Motor-Signal Verification},
  journal   = {arXiv preprint arXiv:2601.17280},
  month     = jan,
  year      = {2026},
}

@inproceedings{Killourhy2009,
  author    = {Kevin S. Killourhy and Roy A. Maxion},
  title     = {Comparing Anomaly-Detection Algorithms for Keystroke Dynamics},
  booktitle = {IEEE/IFIP International Conference on Dependable Systems and Networks (DSN)},
  pages     = {125--134},
  publisher = {IEEE},
  year      = {2009},
}

@inproceedings{Dhakal2018,
  author    = {Vivek Dhakal and Anna Maria Feit and Per Ola Kristensson and Antti Oulasvirta},
  title     = {Observations on Typing from 136 Million Keystrokes},
  booktitle = {ACM CHI Conference on Human Factors in Computing Systems},
  articleno = {646},
  pages     = {1--12},
  publisher = {ACM},
  year      = {2018},
}

@inproceedings{Abera2016,
  author    = {Tigist Abera and N. Asokan and Lucas Davi and Jan-Erik Ekberg and Thomas Nyman and Andrew Paverd and Ahmad-Reza Sadeghi and Gene Tsudik},
  title     = {{C-FLAT}: Control-Flow Attestation for Embedded Systems Software},
  booktitle = {ACM Conference on Computer and Communications Security (CCS)},
  pages     = {743--754},
  publisher = {ACM},
  year      = {2016},
}

@inproceedings{Petz2023,
  author    = {Adam Petz and Perry Alexander},
  title     = {An Infrastructure for Faithful Execution of Remote Attestation Protocols},
  booktitle = {NASA Formal Methods Symposium (NFM)},
  series    = {Lecture Notes in Computer Science},
  volume    = {12673},
  pages     = {268--286},
  publisher = {Springer},
  year      = {2021},
}

@inproceedings{Ramsdell2019,
  author    = {John D. Ramsdell and Paul D. Rowe and Perry Alexander and Sarah Helble and Peter A. Loscocco and J. Aaron Pendergrass and Adam Petz},
  title     = {Orchestrating Layered Attestations},
  booktitle = {International Conference on Principles of Security and Trust (POST)},
  series    = {Lecture Notes in Computer Science},
  volume    = {11426},
  pages     = {197--221},
  publisher = {Springer},
  year      = {2019},
}

@inproceedings{Kretz2024,
  author    = {Ian D. Kretz and Paul D. Rowe and Clare C. Parran and John D. Ramsdell},
  title     = {Evidence Tampering and Chain of Custody in Layered Attestations},
  booktitle = {International Symposium on Principles and Practice of Declarative Programming (PPDP)},
  articleno = {14},
  pages     = {1--11},
  publisher = {ACM},
  year      = {2024},
}

@inproceedings{Ammar2025,
  author    = {Mahmoud Ammar and Adam Caulfield and Ivan {De Oliveira} Nunes},
  title     = {{SoK}: Integrity, Attestation, and Auditing of Program Execution},
  booktitle = {IEEE Symposium on Security and Privacy (S\&P)},
  pages     = {3255--3272},
  publisher = {IEEE},
  year      = {2025},
}

@article{Schnabl2025,
  author  = {Christoph Schnabl and Daniel Hugenroth and Bill Marino and Alastair R. Beresford},
  title   = {Attestable Audits: Verifiable {AI} Safety Benchmarks Using Trusted Execution Environments},
  journal = {arXiv preprint arXiv:2506.23706},
  year    = {2025},
}

@inproceedings{Arfaoui2022,
  author    = {Ghada Arfaoui and Pierre-Alain Fouque and Thibaut Jacques and Pascal Lafourcade and Adina Nedelcu and Cristina Onete and L{\'e}o Robert},
  title     = {A Cryptographic View of Deep-Attestation, or How to Do Provably-Secure Layer-Linking},
  booktitle = {International Conference on Applied Cryptography and Network Security (ACNS)},
  series    = {Lecture Notes in Computer Science},
  volume    = {13269},
  pages     = {399--418},
  publisher = {Springer},
  year      = {2022},
}

@inproceedings{Crosby2009,
  author    = {Scott A. Crosby and Dan S. Wallach},
  title     = {Efficient Data Structures for Tamper-Evident Logging},
  booktitle = {USENIX Security Symposium},
  pages     = {317--334},
  publisher = {USENIX Association},
  year      = {2009},
}

@inproceedings{Kocher2019,
  author    = {Paul Kocher and Jann Horn and Anders Fogh and Daniel Genkin and Daniel Gruss and Werner Haas and Mike Hamburg and Moritz Lipp and Stefan Mangard and Thomas Prescher and Michael Schwarz and Yuval Yarom},
  title     = {Spectre Attacks: Exploiting Speculative Execution},
  booktitle = {IEEE Symposium on Security and Privacy (S\&P)},
  pages     = {1--19},
  publisher = {IEEE},
  year      = {2019},
}

@misc{SGAxe2020,
  author       = {Stephan van Schaik and Andrew Kwong and Daniel Genkin and Yuval Yarom},
  title        = {{SGAxe}: How {SGX} Fails in Practice},
  year         = {2020},
  note         = {Extends CacheOut attack to extract SGX attestation keys},
}

@article{Nilsson2020,
  author  = {Alexander Nilsson and Pegah Nikbakht Bideh and Joakim Brorsson},
  title   = {A Survey of Published Attacks on {Intel SGX}},
  journal = {arXiv preprint arXiv:2006.13598},
  year    = {2020},
}

@inproceedings{Alder2022,
  author    = {Fritz Alder and Arseny Kurnikov and Andrew Paverd and N. Asokan},
  title     = {Migrating {SGX} Enclaves with Persistent State},
  booktitle = {IEEE/IFIP International Conference on Dependable Systems and Networks (DSN)},
  pages     = {195--206},
  publisher = {IEEE},
  year      = {2018},
}

@inproceedings{Gu2017,
  author    = {Jinyu Gu and Zhichao Hua and Yubin Xia and Haibo Chen and Binyu Zang and Haibing Guan and Jinming Li},
  title     = {Secure Live Migration of {SGX} Enclaves on Untrusted Cloud},
  booktitle = {IEEE/IFIP International Conference on Dependable Systems and Networks (DSN)},
  pages     = {225--236},
  publisher = {IEEE},
  year      = {2017},
}

@inproceedings{Stefanov2013,
  author    = {Emil Stefanov and Marten van Dijk and Elaine Shi and Christopher W. Fletcher and Ling Ren and Xiangyao Yu and Srinivas Devadas},
  title     = {Path {ORAM}: An Extremely Simple Oblivious {RAM} Protocol},
  booktitle = {ACM Conference on Computer and Communications Security (CCS)},
  pages     = {299--310},
  publisher = {ACM},
  year      = {2013},
}

@article{Hussain2021,
  author  = {Md. Akmal Hussain and Salil S. Kanhere and Sanjay K. Jha},
  title   = {A Survey on Online Exam Proctoring},
  journal = {Computers \& Security},
  volume  = {108},
  pages   = {102331},
  year    = {2021},
}

@inproceedings{Brandenburger2017,
  author    = {Marcus Brandenburger and Christian Cachin and Matthias Lorenz and R{\"u}diger Kapitza},
  title     = {Rollback and Forking Detection for Trusted Execution Environments Using Lightweight Collective Memory},
  booktitle = {IEEE/IFIP International Conference on Dependable Systems and Networks (DSN)},
  pages     = {157--168},
  publisher = {IEEE},
  year      = {2017},
}

@inproceedings{Strackx2016,
  author    = {Raoul Strackx and Frank Piessens},
  title     = {Ariadne: A Minimal Approach to State Continuity},
  booktitle = {USENIX Security Symposium},
  pages     = {875--892},
  publisher = {USENIX Association},
  year      = {2016},
}

@inproceedings{Tsai2017,
  author = {Chia-che Tsai and Donald E. Porter and Mona Vij},
  title = {Graphene-{SGX}: A Practical Library {OS} for Unmodified Applications on {SGX}},
  booktitle = {USENIX Annual Technical Conference (ATC)},
  pages = {645--658},
  publisher = {USENIX Association},
  year = {2017},
}

@inproceedings{Osvik2006,
  author    = {Dag Arne Osvik and Adi Shamir and Eran Tromer},
  title     = {Cache Attacks and Countermeasures: The Case of {AES}},
  booktitle = {Topics in Cryptology -- {CT-RSA} 2006},
  series    = {Lecture Notes in Computer Science},
  volume    = {3860},
  pages     = {1--20},
  publisher = {Springer},
  year      = {2006},
  doi       = {10.1007/11605805_1},
}

@inproceedings{Xu2015,
  author    = {Yuanzhong Xu and Weidong Cui and Marcus Peinado},
  title     = {Controlled-Channel Attacks: Deterministic Side Channels for Untrusted Operating Systems},
  booktitle = {IEEE Symposium on Security and Privacy (S\&P)},
  pages     = {640--656},
  publisher = {IEEE},
  year      = {2015},
  doi       = {10.1109/SP.2015.45},
}

\end{document}